\acrodef{OLS}{ordinary least squares estimator}
\acrodef{MLE}{maximum likelihood estimator}
\acrodef{SNR}{signal-to-noise ratio}
\acrodef{LTI}{linear time-invariant}
\newtheorem{theorem}{Theorem}
\newtheorem{corollary}{Corollary}
\newtheorem{remark}{Remark}
\newtheorem{assumption}{Assumption}
\newtheorem{prop}{Proposition}
\newtheorem{definition}{Definition}
\DeclareMathOperator*{\argmin}{arg\,min}
\DeclareMathOperator*{\diag}{diag}
\DeclareMathOperator*{\Tr}{Tr}
\newcommand{\R}{\mathbb{R}}
\newcommand{\Sbb}{\mathbb{S}}
\newcommand{\Prob}{\mathbb{P}}
\newcommand{\Expect}{\mathbb{E}}
\newcommand{\N}{\mathcal{N}}
\newcommand{\SSet}{\mathcal{S}}
\newcommand{\E}{\mathcal{E}}
\newcommand{\bigO}{\mathcal{O}}
\newcommand{\simiid}{\stackrel{\text{i.i.d.}}{\sim}}
\newcommand{\KL}[2]{\mathrm{KL}({#1} \Vert {#2})}
\newcommand{\kl}[2]{\mathrm{kl}({#1} \Vert {#2})}
\newcommand{\change}[1]{\textcolor{black}{#1}}
\title{Sample Complexity Bounds for Linear System Identification from a Finite Set}
\author{Nicolas Chatzikiriakos and Andrea Iannelli
	\thanks{This work is funded by Deutsche Forschungsgemeinschaft (DFG, German Research Foundation) under Germany's Excellence Strategy - EXC 2075 – 390740016. We acknowledge the support by the Stuttgart Center for Simulation Science (SimTech).}
	\thanks{The authors are with the University of Stuttgart, Institute for Systems
		Theory and Automatic Control, 70550 Stuttgart,
		Germany. (e-mail: \{\mbox{nicolas.chatzikiriakos}, {andrea.iannelli}\}@ist.uni-stuttgart.de)}%
}
\begin{document}

\maketitle

\begin{abstract}
    This paper considers a finite sample perspective on the problem of identifying an LTI system from a finite set of possible systems using trajectory data. 
    To this end, we use the maximum likelihood estimator to identify the true system and provide an upper bound for its sample complexity. 
    Crucially, the derived bound does not rely on a potentially restrictive stability assumption.
    Additionally, we leverage tools from information theory to provide a lower bound to the sample complexity that holds independently of the used estimator.     
    The derived sample complexity bounds are analyzed analytically and numerically. 
\end{abstract}

\begin{keywords}
	Linear System Identification, Maximum Likelihood Estimation, Finite Sample Analysis
\end{keywords}

\section{Introduction}
In this work, we consider the problem of identifying an \ac{LTI} system from a finite set of  system models using data from a finite and noisy trajectory. 
To this end, we use tools from statistical learning theory and information theory to derive  high probability upper and lower bounds of the sample complexity of identifying the true system using the \ac{MLE}. 

The problem of identifying the true system from a given finite set of systems naturally appears in many applications~\cite{Choirat2012}. 
In particular, switched systems are prevalent across different domains such as mechanical systems or the automotive industry~\cite{Liberzon1999}. 
Identifying the active mode of a switched system is also a relevant problem for control, as shown, e.g., by several works in the adaptive control literature~\cite{Narendra1997}. Recently, renewed interest has emerged in this literature for considering uncertainty in the form of a finite set of models~\cite{Rantzer2021}.
An additional application domain motivating our interest are ecological and evolutionary models 
where often the correct hypothesis out of a finite hypothesis class needs to be determined from observations~\cite{Johnson2004}.
Choosing an element from a finite set by leveraging information coming from measured data is also paradigmatic of many decision-making problems, e.g., in the bandits literature~\cite{Lattimore2020}. 
A notable example is best-arm identification, 
a pure exploration problem that has been successfully applied, e.\,g., for clinical trials~\cite{Villar2015}. In the fixed confidence setting~\cite{Jamieson2014}, the goal is to identify the best arm among a selection of arms with a desired confidence. 
In the dynamical systems setting, this translates into the high-probability identification of the true system from a finite set of systems.
A key difference, to the setup considered in this work is that best-arm identification assumes static arms and uses the control input to achieve the goal as fast as possible 
In this work, we assume Gaussian control inputs 
however the system dynamic introduces correlation we need to address.
%
%
\paragraph*{Related Works}
While in the asymptotic regime, the statistic analysis of system identification has a long history~\cite{ljung1999system}, recently novel tools in statistical learning theory~\cite{Abbasi2011Improved, wainwright2019high} sparked an increasing interest in non-asymptotic results, analyzing error bounds of estimated models, especially the \ac{OLS}, from a finite-sample perspective. 
While first works~\cite{dean2020sample,matni2019tutorial} relied on i.i.d data 
the now predominant part of the literature analyses the case of trajectory data.
Hereby,~\cite{simchowitz2018learning} was the first work to provide a finite sample analysis for fully observed marginally stable systems. 
The analysis was extended to unstable systems~\cite{sarkar2019near} uncovering a statistical inconsistency of the \ac{OLS} under certain conditions. 
Results providing lower bounds on the sample complexity~\cite{jedra2019sample} further enabled a fundamental analysis and understanding of difficulties in identifying linear systems from trajectory data~\cite{tsiamis2021linear}. 
The finite sample analysis of the identification of switched systems using \ac{OLS} has been considered in~\cite{Sarkar2019}, with known switching sequences and unknown system matrices. 
For a comprehensive overview of non-asymptotic system identification, we refer to~\cite{Tsiamis2023}.
\paragraph*{Contribution}
To the best of our knowledge, we provide the first sample complexity analysis for the problem of identifying an unknown \ac{LTI} system from a finite set of models. 
To this end, we derive an instance specific sample complexity upper bound for the \ac{MLE}.
Hereby, the tools used to analyze the unconstrained \ac{OLS} when identifying an unknown system from a continuous set can not be applied, since they rely on the closed-form solution of the \ac*{OLS}.
In contrast, our proof relies on showing high-probability concentration of the cost of the \ac*{MLE} for the true system and anti-concentration for all other systems. 
A further advantage of the proposed approach is that, unlike most finite sample results for the continuous set case, the results in this work do not rely on any stability assumptions.  
In addition, we provide an instance specific sample complexity lower bound for $\delta$-stable algorithms. 
An analytical and numerical analysis of the bounds shows which factors influence the hardness of identification.
%
\paragraph*{Notation}
The unit sphere in $\R^n$ is denoted by $\mathbb{S}^{n-1}$.
Given a matrix $M$ we denote 
its Frobenius norm by $\Vert M \Vert_\mathrm{F}$. 
Given a vector $x\in \R^{n}$ and a matrix $P\succ 0$ we define $\Vert x\Vert_{P}^2 = x^\top P x$. In the special case of $P=I$ we omit the subscript. 
We denote matrix blocks that can be inferred from symmetry by $\star$, i.\,e., we write $\Lambda^\top \Sigma \Lambda =[\star] \Sigma \Lambda$.
We denote the canonical basis in $\R^n$ with $e_1,\dots, e_n$.
Given some $z\in \R$ we write $\lfloor z \rfloor$ to denote the floor-function.  
We use the shorthand $\Prob_\theta[\cdot]$ ($\Expect_\theta[\cdot]$) when referring to the probability (expectation) given that the system generating the data is given by $\theta$.
%
%
\section{Preliminaries}
\subsection{Problem setup}\label{sec:ProblemSetup}
Consider the  linear time-invariant discrete-time system
\begin{equation}
    x_{t+1} = A_* x_t + B_* u_t + w_t, \label{eq:TrueSysEvo}
\end{equation}
where $x_t\in \R^{n_x}$ is the state of the system, $u_t\in\R^{n_u}$ is the control input and $w_t \in \R^{n_x}$ is unknown process noise.  
We seek to identify the unknown system matrices $\theta_* = (A_*, B_*)$ from a single trajectory $\{x_t\}_{t=1}^T$, $\{u_t\}_{t=1}^{T-1}$ of length $T$. 
We assume the data is collected as follows.
\begin{assumption}\label{ass:Gaussian}
The process noise and control input are i.\,i.\,d. zero-mean Gaussian with known covariance matrices  $\Sigma_w, \Sigma_u \succ 0$, i.\,e., $ w_t\simiid \mathcal{N}(0, \Sigma_w)$  and $u_t\simiid \mathcal{N}(0, \Sigma_u)$.
\end{assumption}
\change{While Assumption~\ref{ass:Gaussian} is standard in non-asymptotic identification literature \cite{simchowitz2018learning,Tsiamis2023}, extensions to other noise classes are an interesting topic for future work.} 
Further, we assume to know a finite set of possible systems which contains the true system, i.\,e., ${\theta_* \in \SSet := \{\theta_0, \dots, \theta_N\}}$, where $\theta_i = (A_i, B_i)$.
We assume $(A_*, B_*) = (A_0, B_0)$ to simplify the notation.
\subsection{Maximum likelihood estimation}
To identify the true system $(A_*, B_*)$ from data we resort to the \ac{MLE}, which is asymptotically optimal and efficient \cite{Myung2003}. 
Here, we analyze its non-asymptotic behavior, by means of statistical learning theory tools. 
To this end, observe that the probability of collecting the data $d_t \coloneqq \{x_{t+1}, x_t, u_t\}$ from system $i$ is given by 
\begin{equation*}
    \Prob_{\theta_i}(d_t) = \frac{1}{\sqrt{(2\pi)^{n_x}\vert \Sigma_w \vert }} e^{-\frac12[\star] \Sigma_w^{-1}(x_{t+1} - A_i x_t - B_i u_t)}.
\end{equation*}
Based on this observation, we define the cost 
\begin{equation}\label{eq:cost}
    \ell_{\theta_i} (x_t, u_t) = \Vert x_{t+1} - A_i x_t - B_i u_t \Vert_{\Sigma_w^{-1}}^2,
\end{equation}
which is proportional to the negative log-likelihood of observing the data $\{x_{t+1}, x_t, u_t\}$ from system $i$.
Thus, the \ac{MLE} minimizes the empirical risk ${\hat L (\theta) = \frac 1T \sum_{t=1}^T \ell_{\theta}(x_t, u_t)}$, i.\,e., 
\begin{equation} \label{eq:MLE}
    \hat \theta_T \in \argmin_{\theta \in \SSet} \hat L (\theta).
\end{equation}
For our analysis we adopt the notion of sample complexity (see, e.\,g., \cite{Tsiamis2023})
and aim at deriving an instance specific sample complexity upper bound for the \ac{MLE}~\eqref{eq:MLE}. 
That is, given a chosen failure probability $\delta \in (0,1)$, we provide guarantees of the form 
\begin{equation} \label{eq:sampleCompUpper}
    \Prob[\hat \theta_T = \theta_*] \ge 1-\delta, \quad \text{if } T \ge T_\mathrm{ub},
\end{equation}
where $T_\mathrm{ub} = T_\mathrm{ub}(\delta, \mathcal{A}, \mathcal{S}, \theta_*)$ is an upper bound of the problem specific sample complexity of an analyzed estimation algorithm $\mathcal{A}$. 
Additionally, we provide an instance-specific sample complexity lower bound that holds independently of the used estimator, i.\,e., we show that any 
\begin{equation}\label{eq:sampleCompLower}
    \Prob\left[\hat \theta_T = \theta_*\right] \le 1-\delta \quad \text{if } T \le T_\mathrm{lb},
\end{equation}
where $T_\mathrm{lb} = T_\mathrm{lb}(\delta, \mathcal{S}, \theta_*)$ is a lower bound of the problem specific sample complexity for any estimation algorithm yielding the estimate $\hat \theta_T$. 

\section{Finite sample identification}
%
%
\subsection{A sample complexity upper bound}
Plugging the dynamics~\eqref{eq:TrueSysEvo} into the cost~\eqref{eq:cost} yields
\begin{align*}
    \ell_{\theta_i}(x_t, u_t) 
    &= \Vert w_t + \Delta A_i x_t + \Delta B_i u_t \Vert_{\Sigma_w^{-1}}^2,
\end{align*}
where we defined $\Delta A_i \coloneqq A_* - A_i$ and $\Delta B_i \coloneqq B_* - B_i$.
Further, we define 
\begin{equation}\label{eq:defz}
    z_t^i \coloneqq  \Sigma_w^{-\nicefrac{1}{2}}(w_t + \Delta A_i x_t + \Delta B_i u_t),
\end{equation}
where $\Sigma_w^{-1} = \left(\Sigma_w^{-\nicefrac{1}{2}}\right)^\top \Sigma_w^{-\nicefrac{1}{2}}$. With this, the empirical risk reads 
${
    \hat L (\theta_i) 
    =   \frac{1}{T}\sum_{t=1}^{T}  \Vert z_t^i\Vert^2 
}$.
Since the sum of Gaussian random variables is Gaussian we have that for all $t \in [1, T]$
\begin{equation}
    \label{eq:Distributions}
    x_t \sim \mathcal{N}\Bigg(A^t x_0, \underbrace{\sum_{\tau=0}^t [\star] \Sigma_u ( A_*^\tau B_*)^\top +  [\star] \Sigma_w {A_*^\tau}^\top}_{:=\Sigma_{x_t}} \Bigg).
\end{equation}
With this, we see that $z_t^i$ is Gaussian distributed as well, i.\,e., 
\begin{equation} \label{eq:distZ}
    z_t^i \sim \mathcal{N}\Big(\Sigma_w^{-\nicefrac{1}{2}}\Delta A_i A^t x_0, \Sigma_{z_t}^i\Big),
\end{equation}
where 
\begin{equation} \label{eq:varZ}
    \Sigma_{z_t}^i \coloneqq I + \Sigma_w^{-\nicefrac{1}{2}} \left([\star] \Sigma_u \Delta B_i^\top + [\star] \Sigma_{x_t} \Delta A_i^\top\right) {\Sigma_w^{-\nicefrac{1}{2}}}^\top.
\end{equation}
It is important to note that the sequence $(z_t^i)_{t\ge 1}$ is highly correlated due to the underlying LTI system generating the data. 
To handle the correlation, we leverage the block martingale small-ball condition introduced in \cite{simchowitz2018learning}. 
\begin{definition}[Block Martingale Small-Ball \cite{simchowitz2018learning}]
    Let $(\zeta_t)_{t \ge 1}$ be a $\{\mathcal{F}_t\}_{t\ge 1}$-adapted random process taking values in $\R$.
    We say $(\zeta_t)_{t \ge 1}$ satisfies the $(k, \nu, p)$-block martingale small-ball (BMSB) condition if, for any $j \ge 0$
    \begin{equation*}
        \frac1k \sum_{i=1}^k \Prob[\vert \zeta_{j+i}\vert \ge \nu\vert \mathcal{F}_j] \ge p \quad \text{a.s.}
    \end{equation*}
    Given a process $(z_t)_{t\ge 1}$ taking values in $\R^{n_z}$, we say that it satisfies the $(k, \Gamma_{\mathrm{sb}}, p)$-BMSB condition for $\Gamma_\mathrm{sb} \succ 0$ if, for any fixed $v \in \mathbb{S}^{n_x-1}$ the process $\zeta_t = \langle v, z_t \rangle$ satisfies $(k, \sqrt{v^\top \Gamma_\mathrm{sb}v}, p)$-BMSB.
\end{definition}
The BMSB condition establishes 
a level of 
anti-concentration along a sequence. 
We can show that the sequence $(z^i_t)_{t= 1}^T$ satisfies the BMSB condition. 
\begin{prop}\label{prop:BMSB}
    Let Assumption~\ref{ass:Gaussian} hold, let $z_t^i$ be defined according to~\eqref{eq:defz} and define 
    \begin{equation*}
        \mathcal{F}_t \coloneqq (w_0, \dots, w_t, x_0, \dots, x_t, u_0,\dots, u_t).
    \end{equation*} 
    Then the $\{\mathcal{F}_t\}_{t=1}^T$-adapted random process $(z_t^i)_{t=1}^T$ satisfies the $\left( k, \Sigma_{z_{k/2}}^i, \nicefrac{3}{20}\right)$-BMSB condition for all ${k\in [1,T]}$.
\end{prop}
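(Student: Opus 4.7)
The plan is to verify the defining scalar BMSB inequality for each fixed unit vector $v\in \mathbb{S}^{n_x-1}$ and each index $j\ge 0$, at threshold $\nu=\sqrt{v^\top \Sigma_{z_{k/2}}^{i} v}$. Specifically, I aim to isolate at least $\lceil k/2\rceil$ summands $\ell\in\{1,\dots,k\}$ for which the conditional tail probability $\Prob[\lvert\langle v, z_{j+\ell}^{i}\rangle\rvert \ge \nu \mid \mathcal{F}_j]$ is at least $2\Phi(-1)$; averaging then yields a lower bound of at least $\Phi(-1)\approx 0.1587 \ge 3/20$.

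First, I would compute the conditional law of $z_{j+\ell}^{i}$ given $\mathcal{F}_j$. Observe that $x_{j+1}=A_{*}x_j+B_{*}u_j+w_j$ is $\mathcal{F}_j$-measurable, while by Assumption~\ref{ass:Gaussian} the innovations $(u_{j+m},w_{j+m})_{m\ge 1}$ are independent of $\mathcal{F}_j$. Unrolling \eqref{eq:TrueSysEvo} then gives $x_{j+\ell}\mid \mathcal{F}_j \sim \mathcal{N}(A_{*}^{\ell-1}x_{j+1},\,\Sigma_{x_{\ell-1}})$ with $\Sigma_{x_t}$ as in \eqref{eq:Distributions}. Substituting into \eqref{eq:defz} and using that $(w_{j+\ell},u_{j+\ell})$ are independent of $x_{j+\ell}$ conditionally on $\mathcal{F}_j$, the conditional covariance of $z_{j+\ell}^{i}$ inherits the form of \eqref{eq:varZ} with $\Sigma_{x_t}$ replaced by $\Sigma_{x_{\ell-1}}$.

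Next, I would exploit the monotonicity of $t\mapsto \Sigma_{x_t}$: formula \eqref{eq:Distributions} expresses $\Sigma_{x_t}$ as an increasing sum of positive semidefinite matrices, hence $\Sigma_{x_t}$ is non-decreasing in $t$ in the L\"owner order. Interpreting $\Sigma_{z_{k/2}}^{i}$ as $\Sigma_{z_{\lfloor k/2\rfloor}}^{i}$, this implies $\Sigma_{z_{j+\ell}^{i}\mid \mathcal{F}_j}\succeq \Sigma_{z_{k/2}}^{i}$ for every $\ell\ge \lfloor k/2\rfloor+1$, so the conditional standard deviation of $\langle v, z_{j+\ell}^{i}\rangle$ dominates $\nu$ on the last $\lceil k/2\rceil$ indices. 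Combining this with the mean-robust Gaussian anti-concentration identity $\Prob(\lvert X\rvert \ge \sigma)=\Phi(\mu/\sigma-1)+\Phi(-\mu/\sigma-1)\ge 2\Phi(-1)$ for $X\sim \mathcal{N}(\mu,\sigma^2)$ uniformly in $\mu\in\mathbb{R}$ (the minimum is attained at $\mu=0$) shows that each such summand contributes at least $2\Phi(-1)$. Averaging, $\frac{1}{k}\sum_{\ell=1}^{k}\Prob[\cdot \mid \mathcal{F}_j]\ge \frac{\lceil k/2\rceil}{k}\cdot 2\Phi(-1)\ge \Phi(-1)\ge 3/20$.

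The main obstacle is handling the correlation injected by the LTI dynamics: the $(z_t^{i})$ are far from independent, so a naive pooled argument would be inadequate. Conditioning on $\mathcal{F}_j$ both decouples the future innovations from the past and preserves an explicit conditional Gaussian structure for $z_{j+\ell}^{i}$ whose covariance shares the form of $\Sigma_{z_t}^{i}$; pairing this with the monotonicity of $\Sigma_{x_t}$ in time is what allows the fixed BMSB threshold at time $k/2$ to be dominated by the conditional variance on the back half of the window, which the Gaussian anti-concentration bound then exploits.
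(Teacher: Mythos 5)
Your proposal is correct and follows essentially the same route as the paper's proof: condition on $\mathcal{F}_j$ to get an explicit conditional Gaussian law for $z_{j+\ell}^i$, use L\"owner monotonicity of $t\mapsto\Sigma_{z_t}^i$ to lower-bound the conditional variance by $v^\top\Sigma_{z_{k/2}}^i v$ on the back half of the window, apply the mean-robust Gaussian bound $\Prob[\vert X\vert\ge\sigma]\ge 2\Phi(-1)$, and average. The only cosmetic difference is that you keep the sharp constant $2\Phi(-1)$ and bound the average by $\Phi(-1)\ge\nicefrac{3}{20}$, whereas the paper first rounds to $\nicefrac{3}{10}$ and then halves.
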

The proof of Proposition~\ref{prop:BMSB} builds on arguments used in \cite{simchowitz2018learning} and can be found in Appendix~\ref{sec:ProofBMSB}. 
Before we state the main theorem of this section we define 
\begin{subequations}
    \begin{align}
        \Delta\Lambda_i^u(t) &\coloneqq \Delta A_i\sum_{s=0}^{t} A_*^s B_*\Sigma_u^{\nicefrac{1}{2}} \label{eq:GramLikeU}\\ 
        \Delta \Lambda_i^w(t) &\coloneqq \Delta A_i\sum_{s=0}^{t} A_*^s\Sigma_w^{\nicefrac{1}{2}}\label{eq:GramLikeW}
    \end{align}   
    \label{eq:GramLike} 
\end{subequations}
describing the excitations due to the control input~\eqref{eq:GramLikeU} and noise~\eqref{eq:GramLikeW} projected on the differences between $A_*$ and $A_i$.  
\begin{theorem}\label{th:upperBound}
    Let $\{x_t\}_{t=1}^T$, $\{u_t\}_{t=1}^{T}$ be data collected from system~\eqref{eq:TrueSysEvo} according to Assumption~\ref{ass:Gaussian}. Fix a failure probability $\delta  \in (0, 1)$.  
    \begin{subequations}
    If there exists $k\in [1,T]$ such that 
    \begin{align} \label{eq:ThMainBurnIn}
        \lfloor\nicefrac{T}{k}\rfloor &\ge \nicefrac{320}{3} \log(\nicefrac{2n_x N}{\delta}) 
    \end{align}
        and for all $i \in [1, N]$ 
    \begin{align}
            \sqrt{n_x} + n_x  &\le \frac{9k\lfloor \nicefrac{T}{k}\rfloor }{3200T} \Bigg(\Vert\Sigma_w^{-\nicefrac{1}{2}}\Delta \Lambda_i^w(\nicefrac{k}{2}) \Vert_\mathrm{F}^2 + n_x \label{eq:ThMainCond} \\ & \quad +  \Vert \Sigma_w^{-\nicefrac{1}{2}} \Delta B_i \Sigma_u^{\nicefrac{1}{2}}\Vert_\mathrm{F}^2  + \Vert\Sigma_w^{-\nicefrac{1}{2}}\Delta\Lambda_i^u(\nicefrac{k}{2})  \Vert_{\mathrm{F}}^2 \Bigg), \notag
    \end{align}
    then the \ac{MLE}~\eqref{eq:MLE} yields the true system $\theta_*$ with probability at least $1-\delta$, i.\,e.,  $\Prob[\hat \theta_T = \theta_*] \ge 1-\delta$. 
    \end{subequations} 
\end{theorem}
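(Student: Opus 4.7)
The event $\{\hat\theta_T = \theta_*\}$ coincides with $\hat L(\theta_*) < \hat L(\theta_i)$ for every $i\in[1,N]$. Substituting $\ell_{\theta_i}(x_t, u_t) = \|z_t^i\|^2$ from~\eqref{eq:defz} and noting that for $i = 0$ we have $z_t^0 = \Sigma_w^{-\nicefrac{1}{2}} w_t \sim \mathcal{N}(0, I_{n_x})$ \emph{independently} over $t$, the task splits into (i) a concentration upper bound on $\hat L(\theta_*)$, which reduces to a rescaled $\chi^2_{T n_x}$ variable, and (ii) anti-concentration lower bounds on $\hat L(\theta_i)$ for each $i\neq 0$. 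Part (ii) is the hard half, because $(z_t^i)_{t\geq 1}$ inherits strong temporal correlation from the LTI dynamics of $x_t$.

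For (i), a standard Laurent--Massart chi-squared tail yields, with probability at least $1-\delta/2$,
\[
    \tfrac{1}{T}\sum_{t=1}^T \|\Sigma_w^{-\nicefrac{1}{2}} w_t\|^2 \leq n_x + 2\sqrt{\tfrac{n_x\log(2/\delta)}{T}} + \tfrac{2\log(2/\delta)}{T},
\]
which under the burn-in~\eqref{eq:ThMainBurnIn} simplifies to at most $n_x + c\sqrt{n_x}$ for an absolute constant $c$; multiplying through by $T$ reproduces the left-hand side of~\eqref{eq:ThMainCond}. For (ii) I would invoke the matrix small-ball lemma of~\cite{simchowitz2018learning}: Proposition~\ref{prop:BMSB} certifies that $(z_t^i)$ is $(k, \Sigma_{z_{k/2}}^i, \nicefrac{3}{20})$-BMSB, and the lemma then produces, with probability at least $1-\delta/(2N)$,
\[
    \sum_{t=1}^T z_t^i (z_t^i)^\top \,\succeq\, \tfrac{9}{3200}\, k\lfloor T/k\rfloor\, \Sigma_{z_{k/2}}^i,
\]
provided $\lfloor T/k\rfloor \gtrsim p^{-2}\log(2n_x N/\delta)$ with $p = \nicefrac{3}{20}$, which is exactly~\eqref{eq:ThMainBurnIn}. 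Taking the trace of both sides and expanding $\Tr(\Sigma_{z_{k/2}}^i)$ via~\eqref{eq:varZ}, the cyclic trace property, the decomposition of $\Sigma_{x_{k/2}}$ into its input-driven and noise-driven components, and the definitions~\eqref{eq:GramLike}, delivers the right-hand side of~\eqref{eq:ThMainCond}.

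A union bound over the concentration event in (i) and the $N$ anti-concentration events in (ii) costs $\delta/2 + N \cdot \delta/(2N) = \delta$ in probability, and on the surviving event, condition~\eqref{eq:ThMainCond} directly forces $T\hat L(\theta_*) < T\hat L(\theta_i)$ for every $i\neq 0$, so $\hat\theta_T = \theta_*$. The principal obstacle is step (ii): since iid concentration is unavailable for $(z_t^i)$, the BMSB-based lower bound is unavoidable and is precisely what produces the loss factor $k\lfloor T/k\rfloor/T$ multiplying $\Tr(\Sigma_{z_{k/2}}^i)$ and the block-based burn-in~\eqref{eq:ThMainBurnIn}. A secondary subtlety is that the trace expansion must separate cleanly into an input-excitation term $\|\Sigma_w^{-\nicefrac{1}{2}}\Delta\Lambda_i^u(\nicefrac{k}{2})\|_F^2$ and a noise-excitation term $\|\Sigma_w^{-\nicefrac{1}{2}}\Delta\Lambda_i^w(\nicefrac{k}{2})\|_F^2$, which is guaranteed by the independence of $\{u_t\}$ and $\{w_t\}$ and the block structure of $\Sigma_{x_t}$ under Assumption~\ref{ass:Gaussian}.
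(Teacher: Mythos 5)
Your proposal follows essentially the same route as the paper: concentration of $\hat L(\theta_*)$ as a rescaled $\chi^2_{Tn_x}$ variable, BMSB-based anti-concentration of $\hat L(\theta_i)$ with the constants $\nicefrac{9}{3200}$ and $\nicefrac{320}{3}$, a union bound costing $\delta/2 + N\cdot\delta/(2N)$, and the trace expansion of $\Sigma_{z_{k/2}}^i$ into the three excitation terms plus $n_x$. The substitution of Laurent--Massart for the sub-exponential tail in step (i) is immaterial; both give $n_x + c\sqrt{n_x}$ with $c\le 1$ under the burn-in.

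The one step you should repair is the invocation of a \emph{matrix} small-ball lower bound $\sum_t z_t^i (z_t^i)^\top \succeq \tfrac{9}{3200}k\lfloor\nicefrac{T}{k}\rfloor\,\Sigma_{z_{k/2}}^i$ under the burn-in $\lfloor\nicefrac{T}{k}\rfloor\gtrsim p^{-2}\log(2n_xN/\delta)$. The positive-semidefinite version of this statement in \cite{simchowitz2018learning} requires a union bound over a covering of the sphere, which adds a term linear in $n_x$ (of order $n_x\log(\nicefrac{1}{p})$) to the burn-in; condition~\eqref{eq:ThMainBurnIn}, which is only logarithmic in $n_x$, does not supply this. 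Since you only ever take the trace of the matrix inequality, the fix is exactly what the paper does: apply the scalar anti-concentration result (Corollary~\ref{co:SimchoTight}) to the coordinate processes $\langle e_\ell, z_t^i\rangle$ for $\ell\in[1,n_x]$, union over the $n_x$ canonical directions, and sum to obtain $\sum_t\Vert z_t^i\Vert^2 \ge \tfrac{9}{3200}k\lfloor\nicefrac{T}{k}\rfloor\Tr(\Sigma_{z_{k/2}}^i)$ directly. This coordinate-wise route is precisely what makes the $\log(2n_xN/\delta)$ burn-in sufficient; as written, your appeal to the full matrix bound is stronger than what the stated condition can deliver.
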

\begin{proof} The proof is articulated in three steps. First, we use the BMSB condition to show anti-concentration of $\frac1T \sum_{t=1}^T \Vert z_t^i \Vert$. 
Secondly, we show that $\frac1T \sum_{t=1}^T \Vert w_t \Vert$ concentrates. 
Finally, we combine the previous two results to obtain that $\hat L(\theta_*) < \hat L(\theta_i) \quad \forall \theta_i\neq\theta_*$
with high probability.
%
%
\paragraph{Lower bounding the empirical risk of $\theta  \neq \theta_*$}
Recall that Proposition~\ref{prop:BMSB} shows that the process $(z_t^i )_{t=1}^T$ satisfies the $\left( k, \Sigma_{z_{k/2}}^i, \nicefrac{3}{20}\right)$-BMSB condition. 
Note that $e_\ell \in \mathbb{S}^{n_x-1} \, \forall l \in [1, n_x]$. Thus, for some fixed $\ell\in[1, n_x]$, it follows 
from the BMSB condition 
that 
\begin{equation*}
    \frac1k \sum_{t=1}^k \Prob \left[\vert [z^i_{s+t}]_\ell \vert \ge  \sqrt{{e_\ell}^\top \Sigma_{z_{k/2}}^i e_\ell}\right] \ge \frac{3}{20},
\end{equation*}
where $[\cdot]_\ell$ extracts the $\ell$-th element from a vector. 
By applying Corollary\,\ref{co:SimchoTight} (Appendix~\ref{app:SimchoTight}), which is a tighter version of \cite[Proposition 2.5]{simchowitz2018learning}, we obtain the  anti-concentration result
\begin{equation}
        \Prob\Bigg[\sum_{t=1}^{T}{[z_t^i]}^2_\ell \le \frac{9k\lfloor \nicefrac{T}{k}\rfloor }{3200}{e_\ell}^\top \Sigma_{z_{k/2}}^i e_\ell \Bigg] \le e^{-\frac{3}{320}\lfloor \nicefrac{T}{k}\rfloor}.
    \label{eq:antiConcentr}
\end{equation}
In the following we impose $\exp(-\nicefrac{3}{320}\lfloor \nicefrac{T}{k}\rfloor) \le \frac{\delta}{2n_x N }$, which requires the burn-in time condition
\begin{equation}\label{eq:burnIn1}
   \lfloor \nicefrac{T}{k}\rfloor \ge \nicefrac{320}{3} \log(\nicefrac{2n_x N}{\delta}),
\end{equation}
which is~\eqref{eq:ThMainBurnIn}. We consider the events
\begin{equation*}
    \mathcal{E}_{\ell}:= \sum_{t=1}^{T}{[z_t^i]}^2_\ell \ge \nicefrac{9k}{320} \lfloor \nicefrac{T}{k}\rfloor {e_\ell}^\top \Sigma_{z_{k/2}}^i e_\ell, 
\end{equation*}
as well as their union 
$\mathcal{E} := \bigcup\limits_{\ell=1}^{n_x} {\mathcal{E}}_{\ell}$.
It follows that 
\begin{equation*}
    \mathcal{E} \implies \sum_{\ell=1}^{n_x} \sum_{t=1}^{T}{[z_t^i]}^2_\ell \ge \sum_{\ell=1}^{n_x}
    \nicefrac{9k}{3200} \lfloor \nicefrac{T}{k}\rfloor {e_\ell}^\top \Sigma_{z_{k/2}}^i e_\ell.
\end{equation*}
Because of~\eqref{eq:burnIn1} we have $\Prob[\mathcal{E}_{\ell}]\ge 1- \frac{\delta}{2n_x N }$. Hence, using union bound arguments, we obtain
\begin{alignat*}{1}
    \Prob\Bigg[\sum_{t=1}^{T}& \Vert z_t^i\Vert^2                                                                                        \ge \sum_{\ell=1}^{n_x}
     \nicefrac{9k}{3200} \lfloor \nicefrac{T}{k}\rfloor {e_\ell}^\top \Sigma_{z_{k/2}}^i e_\ell\Bigg]                                                                                  \\ &= \Prob\left[\sum_{\ell=1}^{n_x} \sum_{t=1}^{T}{[z_t^i]}^2_\ell \ge \sum_{\ell=1}^{n_x}
     \nicefrac{9k}{3200}  \lfloor \nicefrac{T}{k}\rfloor {e_\ell}^\top \Sigma_{z_{k/2}}^i e_\ell\right] \\                         & \ge \Prob\left[\mathcal{E}\right] \ge \prod_{l=1}^{n_x}\left(1-\frac{\delta}{2n_x N }\right) \ge 1- \frac{\delta}{2N}.         
\end{alignat*}
Finally, observe $\sum_{\ell=1}^{n_x} {e_\ell}^\top \Sigma_{z_{k/2}}^i e_\ell = \Tr\left(\Sigma_{z_{k/2}}^i\right)$
to obtain
\begin{equation}\label{eq:resultFalseSys}
        \Prob\Bigg[\sum_{t=1}^{T}\Vert z_t^i\Vert^2                                                                             \ge  \nicefrac{9k}{3200} \lfloor \nicefrac{T}{k}\rfloor \Tr\left(\Sigma_{z_{k/2}}^i\right)\Bigg] \ge 1- \frac{\delta}{2N}.
\end{equation}
%
%
\paragraph{Upper bounding the empirical risk of $\theta_0$}
Note that 
$\Delta A_0 = \Delta B_0 = 0$. 
Set $\zeta_t^\top = w_t^\top \Sigma^{-\frac12}_w \simiid \mathcal{N}(0, I)$ to obtain
\begin{align*}
    \hat L (\theta_0) = \frac 1T \sum_{t=1}^T \Vert w_t \Vert_{\Sigma^{-1}_w} =  \frac1T \sum_{t=1}^T \zeta_t^\top \zeta_t  = \frac{1}{T} \sum_{t=1}^{n_xT}\xi_t^2,
\end{align*}
where $\xi_k \stackrel{\text{i.i.d.}}{\sim} \mathcal{N}(0, 1)$. 
Clearly, $\xi_k^2$ is sub-exponential with parameters $(4, 4)$. 
Since the sum of sub-exponential random variables $X_i$ with parameters $( \nu_i^2, \alpha_i)$  is sub-exponential with parameters $(\sum_i \nu_i^2, \max_i(\alpha_i))$ \cite{wainwright2019high}, we have ${\sum_{t=1}^{T \cdot n_x} \xi_t^2 \sim \mathrm{subExpo}\left(4Tn_x, 4\right)}$.
Using \cite[Proposition 2.9]{wainwright2019high} with $t=\sqrt{n_x} T$  after minor reformulations we obtain 
\begin{equation}
    \Prob\left[\frac{1}{T} \sum_{k=1}^{n_xT} \xi_k^2 \ge \sqrt{n_x} + n_x\right] \le \exp(-\nicefrac{T}{8}) \le \frac{\delta}{2},
    \label{eq:BoundTrueSysFinal}
\end{equation}
where the last inequality is satisfied if  
\begin{equation} \label{eq:burnIn2} 
    T \ge 8 \log(\nicefrac{2}{\delta}).
\end{equation}
%
%
\paragraph{Leveraging concentration and anti-concentration} 
First, note that the burn-in time condition~\eqref{eq:burnIn1} implies the burn-in time condition~\eqref{eq:burnIn2}. Hence, from~\eqref{eq:BoundTrueSysFinal} we have 
\begin{equation*}
              \Prob[\mathcal{E}_{\theta_0}]\coloneqq\Prob\left[\hat L(\theta_0) \le  \sqrt{n_x} + n_x \right] \ge 1 - \frac{\delta}{2}.
\end{equation*}
Because of~\eqref{eq:resultFalseSys} for each $\theta_i$, $i\in [1,N]$ it holds that
\begin{align*}
              \Prob[\mathcal{E}_{\theta_i}] &\coloneqq \Prob\left[\hat L(\theta_i) \ge \frac{9k\lfloor \nicefrac{T}{k}\rfloor}{3200 T} \Tr\left(\Sigma_{z_{k/2}}^i \right) \right] \ge 1- \frac{\delta}{2N}.
\end{align*}
If $\sqrt{n_x} + n_x  < \frac{9k\lfloor \nicefrac{T}{k}\rfloor}{3200 T} \Tr( \Sigma_{z_{k/2}}^i)$ holds  $\forall i \in [1, N]$ we have 
\begin{equation*}\label{eq:overallProb}
        \Prob\left[\hat \theta = \theta_*\right] \ge \Prob\left[\bigcup\limits_{i=0}^{N} \mathcal{E}_{\theta_i}\right] = \left(1-\tfrac{\delta}{2}\right)\prod_{i=1}^N \left(1-\tfrac{\delta}{2 N }\right) \ge 1-\delta,
\end{equation*}
where the second inequality follows from union-bound arguments.
From here we can conclude the proof, by using linearity of the trace, as well as $\Tr(AA^\top) = \Vert A\Vert_\mathrm{F}^2$ to obtain 
\begin{align*}
    \Tr\left( \Sigma_{z_{k/2}}^i \right) &= \Vert\Sigma_w^{-\nicefrac{1}{2}}\Delta \Lambda_i^w(\nicefrac{k}{2}) \Vert_\mathrm{F}^2 + n_x  \\ & \quad +  \Vert \Sigma_w^{-\nicefrac{1}{2}} \Delta B_i \Sigma_u^{\nicefrac{1}{2}}\Vert_\mathrm{F}^2  + \Vert\Sigma_w^{-\nicefrac{1}{2}}\Delta\Lambda_i^u(\nicefrac{k}{2})  \Vert_{\mathrm{F}}^2.
\end{align*}
\end{proof}
\begin{remark}\label{rem:stability}
    \change{Theorem~\ref{th:upperBound} imposes no stability assumptions on the system~\eqref{eq:TrueSysEvo}. This is in contrast with non-asymptotic results for the unconstrained \ac{OLS}, where a statistical inconsistency has been shown for certain classes of unstable systems~\cite{sarkar2019near}.}
\end{remark}
Note that any $T$ satisfying the conditions in Theorem~\ref{th:upperBound} is an upper bound to the sample complexity as defined in~\eqref{eq:sampleCompUpper}. 
\change{Further, the cardinality $N+1$ of the set $\mathcal{S}$ enters Theorem~\ref{th:upperBound} in \eqref{eq:ThMainBurnIn} and \eqref{eq:ThMainCond}. Whereas the former shows the dependence $\bigO(\log N)$, the influence of $N$ on \eqref{eq:ThMainCond} is more subtle and depends on the particular systems being added as $N$ increases.}
\change{Note also, that Theorem~\ref{th:upperBound} depends on the true systems matrices, which are unknown in practice. While data-dependent results might prove more useful from a practical perspective, the value of Theorem~\ref{th:upperBound} lies in understanding the fundamental difficulty of the learning problem.}
To investigate the conservatism of Theorem~\ref{th:upperBound} analytically, we now derive finite-sample identification lower bounds. 
\subsection{A sample complexity lower bound}
In this section, we provide a sample complexity lower bound for the class of $\delta$-stable estimation algorithms, which we define as follows. 
\begin{definition}[$\delta$-stable algorithms]
    Consider the setup described in Section~\ref{sec:ProblemSetup}. An algorithm is called $\delta$-stable, if for all $\delta\in (0,1)$, and any $\theta_*$ and $\SSet$ there exists a finite time $\bar T$ s.t. for all $t\ge \bar T$ we have $\Prob_{\theta_*}(\hat \theta_t = \theta_*)$. 
\end{definition}
The notion of $\delta$-stable algorithms excludes algorithms that yield the same estimate independently of the data observed and is inspired by $(\varepsilon, \delta)$-locally-stable algorithms in \cite{jedra2019sample}. 
\begin{theorem}\label{th:lowerBound}
     Let Assumption~\ref*{ass:Gaussian} hold. 
    Then, for any $\delta$-stable algorithm, all  $\delta \in (0,1)$ and all $i \in [1, N]$ it holds that 
    \begin{align} \label{eq:lowerBound}
        \bar T \Big\Vert \Sigma_w^{-\frac12} \Delta B_i \Sigma_u^{\nicefrac12}\Big\Vert _\mathrm{F}^2  &+  \sum_{s=0}^{\bar T -1}  \left\Vert \Sigma_w^{-\nicefrac12} \Delta\Lambda_{i}^u(s)\right\Vert _\mathrm{F}^2  \\ &\quad + \left\Vert \Sigma_w^{-\nicefrac12} \Delta \Lambda_{i}^w(s)\right\Vert _\mathrm{F}^2   \ge 2 \log\left(\tfrac{1}{2.4 \delta}\right). \notag
    \end{align}
\end{theorem}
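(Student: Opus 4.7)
The plan is to reduce the multi-hypothesis identification problem to $N$ binary hypothesis tests between $\theta_*$ and each alternative $\theta_i$, $i\in[1,N]$, and then lower bound the Kullback--Leibler divergence between the induced trajectory distributions via a standard change-of-measure argument. Fix $i$. By the definition of a $\delta$-stable algorithm applied once with data generated from $\theta_*$ and once from $\theta_i$, at time $\bar T$ we have $\Prob_{\theta_*}(\hat\theta_{\bar T} = \theta_*) \ge 1-\delta$ and $\Prob_{\theta_i}(\hat\theta_{\bar T} = \theta_*) \le \delta$. Applying the data-processing inequality to the binary statistic $\mathbf{1}\{\hat\theta_{\bar T} = \theta_*\}$ then gives $\KL{P^{\bar T}_{\theta_*}}{P^{\bar T}_{\theta_i}} \ge \kl{1-\delta}{\delta}$, where $P^{\bar T}_{\theta}$ denotes the law of the observed trajectory under $\theta$. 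Combining this with the standard bandit-literature inequality $\kl{1-\delta}{\delta} \ge \log(1/(2.4\delta))$ provides a lower bound on the trajectory-level KL that matches the right-hand side of~\eqref{eq:lowerBound} up to the factor of two, which will eventually cancel the $1/2$ of the Gaussian KL formula.

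The second step is to compute $\KL{P^{\bar T}_{\theta_*}}{P^{\bar T}_{\theta_i}}$ explicitly. Under Assumption~\ref{ass:Gaussian}, the inputs $u_t$ are drawn identically under both hypotheses and the initial state does not depend on $\theta$, so the chain rule for KL divergence reduces the trajectory-level divergence to $\sum_t \Expect_{\theta_*}\big[\KL{P_{\theta_*}(x_{t+1}\mid x_t, u_t)}{P_{\theta_i}(x_{t+1}\mid x_t, u_t)}\big]$. Each conditional is Gaussian with common covariance $\Sigma_w$, so the closed-form Gaussian KL yields $\tfrac12\Vert \Sigma_w^{-1/2}(\Delta A_i x_t + \Delta B_i u_t)\Vert^2$. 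Taking expectation under $P_{\theta_*}$ and exploiting the independence of $u_t$ from $x_t$ together with $\Expect_{\theta_*}[u_t u_t^\top]=\Sigma_u$ and $\Expect_{\theta_*}[x_t x_t^\top]\succeq \Sigma_{x_t}$, the cross terms vanish and one obtains the lower bound $\tfrac{\bar T}{2}\Vert \Sigma_w^{-1/2}\Delta B_i \Sigma_u^{1/2}\Vert_\mathrm{F}^2 + \tfrac12 \sum_t \Tr(\Sigma_w^{-1}\Delta A_i \Sigma_{x_t} \Delta A_i^\top)$.

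To finish, I expand $\Sigma_{x_t}$ using~\eqref{eq:Distributions} and invoke linearity of the trace together with $\Tr(MM^\top) = \Vert M\Vert_\mathrm{F}^2$ to recognise the state-covariance contribution as the cumulative sum $\sum_s \Vert \Sigma_w^{-1/2}\Delta\Lambda_i^u(s)\Vert_\mathrm{F}^2 + \Vert\Sigma_w^{-1/2}\Delta\Lambda_i^w(s)\Vert_\mathrm{F}^2$ built out of~\eqref{eq:GramLike}. Chaining this with the KL lower bound derived in the first step and clearing the $1/2$ factor delivers~\eqref{eq:lowerBound}. The main obstacle I anticipate is purely one of bookkeeping in this last step: the squared Frobenius norms of the $\Delta\Lambda$ matrices must be interpreted as telescoping sums that match the per-time-step structure of $\Sigma_{x_t}$, indices have to be aligned precisely with the form stated in the theorem, and one needs to argue that a possibly non-zero mean of $x_t$ contributes only a non-negative term which can be discarded in the direction of the lower bound.
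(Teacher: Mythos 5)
Your proposal is correct and follows essentially the same route as the paper: a change-of-measure/data-processing argument on the event $\{\hat\theta_{\bar T}=\theta_*\}$ giving $\KL{\Prob_{\theta_*}}{\Prob_{\theta_i}}\ge\log(\nicefrac{1}{2.4\delta})$, followed by an explicit evaluation of the trajectory KL as $\tfrac12\sum_s\Expect_{\theta_*}\Vert\Sigma_w^{-\nicefrac12}(\Delta A_i x_s+\Delta B_i u_s)\Vert^2$ and an expansion of $\Sigma_{x_s}$ into the $\Delta\Lambda_i^u,\Delta\Lambda_i^w$ terms. Your phrasing via the chain rule for conditional Gaussian KLs is the same computation the paper performs through the expected log-likelihood ratio, and your remark that a nonzero mean of $x_t$ only adds a non-negative term (so it can be dropped for the lower bound) is a sound, slightly more careful handling of the $x_0$ contribution.
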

The proof of Theorem~\ref{th:lowerBound} is inspired by~\cite{jedra2019sample} and can be found in Appendix~\ref{app:proofLower}.
Importantly, Theorem~\ref{th:lowerBound} can be analyzed to see which factors contribute to identifying the true system. 

%
%
\subsection{Analysis of the sample complexity bounds}\label{sec:comp}
Having established a sample complexity upper bound for the \ac{MLE}~\eqref{eq:MLE} as well as an estimation algorithm independent lower bound, we can now analyze and compare Theorems\,\ref{th:upperBound} and~\ref{th:lowerBound}. 
To this end, we will focus on three key factors that influence the identification of the true system.
%
%
\paragraph{Excitation and noise level} For simplicity consider the case where $\Sigma_w = \sigma_w^2 I$ and $\Sigma_u = \sigma_u^2 I$. In this case, according to both the upper bound (Theorem~\ref{th:upperBound}) and the lower bound (Theorem~\ref{th:lowerBound}) increasing the ratio $\nicefrac{\sigma_u}{\sigma_w}$ allows for either a decrease in the number of samples $T$ or in the failure probability $\delta$.
%
%
\paragraph{Excitation directions} 
Observe that condition~\eqref{eq:ThMainCond} of the upper bound and condition~\eqref{eq:lowerBound} of the lower bound qualitatively both depend on the same three terms and can be interpreted as a \ac{SNR} condition. 
To this end, recall that $\Delta \Lambda_i^B(t)$, $\Delta B_i \Sigma_u^\frac12$ and $\Delta \Lambda_i^I(t)$ can be interpreted as the excitation of the system projected to the difference between systems $\theta_*$ and $\theta_i$. 
Weighting this measure of relevant excitation with the covariance of the noise yields an effective \ac{SNR}. 
Importantly, the directions of the excitation matter. 
That is, using the control input to excite the system where $\Delta A_i$ is large results in a smaller burn-in time or failure probability. 
Further, if the noise affects some states more than others this will also affect the lower and upper bounds.
%
%
\paragraph{Sample efficiency} 
Assume $\Sigma_w$ and $\Sigma_u$ are fixed. Then, decreasing the failure probability $\delta$ requires a larger $T$ to satisfy the lower bound~\eqref{eq:lowerBound}, i.\,e., the failure probability can be decreased when more samples are available.
Regarding the upper bound derived in Theorem~\ref{th:upperBound}, the burn-in time condition~\eqref{eq:ThMainBurnIn} shows a similar coupling between $T$ and $\delta$.  
On the other hand, the \ac{SNR}-condition~\eqref{eq:ThMainCond} does only exhibit a weak dependence on $T$ through the parameter $k$. An increase in $T$ allows for a larger $k$ which in turn enters~\eqref{eq:ThMainCond} through $\Delta \Lambda_i^B(\nicefrac{k}{2})$ and $\Delta \Lambda_i^I(\nicefrac{k}{2})$. 
The dependence of these quantities on $k$ depends heavily on the stability properties of the system. 
As expected, the more stable $A_*$, the harder it is to satisfy~\eqref{eq:ThMainCond}. 
To conclude, it has to be said that, even though the upper bound (Theorem~\ref{th:upperBound}) and lower bound (Theorem~\ref{th:lowerBound}) qualitatively depend on similar quantities, there still is a substantial gap between the two. 
This is largely due to the leading constants in condition~\eqref{eq:ThMainBurnIn} and~\eqref{eq:ThMainCond}, which appear due to the BMSB condition needed because of the correlation in the data. 

\section{Numerical example}\label{sec:numericalExample}
In the following, we investigate the results and observations of the previous sections using a numerical example.\footnote{The Python code for the numerical example can be accessed at: \url{https://github.com/col-tasas/2024-bounds-finite-set-ID}}
To do so we consider the set ${\mathcal{S} = \{(A_0,B), (A_1, B), (A_2, B)\}}$, with 
\begin{align*}
    A_i = \begin{bmatrix}
        a_{i} & 0.1 & 0 \\
        0 & 0.2 & 0 \\
        0 & 0 & b_{i}  
    \end{bmatrix}, 
    \quad
    B = \begin{bmatrix}
       0 & 0 \\
       1 & 0 \\
       0 & 1 
    \end{bmatrix},
\end{align*}
where $a_0=a_2=0.2$, $a_1= 0.1$, $b_0 = b_1=0.5$ and $b_2 = 0.6$. 
\change{We choose a small cardinality of $\SSet$ and low state space dimension since this setup suffices to make a number of key observations while maintaining clarity of exposition.}
Note first, that each $\theta_i \in \SSet$ has a weak coupling between $x_1$ and $u$ making it hard to excite the first mode of the system. This is a structure known to make identification hard using \ac{OLS}~\cite{tsiamis2021linear} and also plays a key role here.
To show the influence of the directions of excitation on the identification of the true system, we conduct three numerical experiments:
\begin{description}
    \item[Exp.\,1:]~$\Sigma_u = \diag(10, 0.1)$, $\Sigma_w = 0.1 I$ 
    \item[Exp.\,2:]~$\Sigma_u = \diag(0.1, 10)$, $\Sigma_w = 0.1 I$
    \item[Exp.\,3:]~$\Sigma_u = \diag(10, 0.1)$, $\Sigma_w = \diag(10, 0.1, 0.001)$.
\end{description}
Table~\ref{tab:Example} shows the percentage of the estimates (within 1000 trials) for varying $T$ for estimation using \ac{MLE} as presented in this work as well as using \ac{OLS} and projecting on the closest system in spectral norm. 
\begin{table*}[t!]
    \caption{Estimation percentages for different number of samples. Lower bound is satisfied with $\delta = 0.05$ for $T \ge 192$ ({Exp.~1}), $T\ge 400$ (Exp.~2) and $T \ge 404$~(Exp~3). Upper bound is not satisfied for any of the displayed experiments and sample sizes.}
    \label{tab:Example}
    \centering
    \footnotesize
    \begin{tabular}{@{}rccccccccccc@{}}
        \toprule 
        $T$ & \multicolumn{3}{c}{$\Prob_\mathrm{MLE}[\theta_0]$ ($\Prob_\mathrm{OLS}[\theta_0]$) in $\%$} & & \multicolumn{3}{c}{$\Prob_\mathrm{MLE}[\theta_1]$ ($\Prob_\mathrm{OLS}[\theta_1]$) in $\%$} && \multicolumn{3}{c}{$\Prob_\mathrm{MLE}[\theta_2]$ ($\Prob_\mathrm{OLS}[\theta_2]$) in $\%$}\\
        \cmidrule{2-4} \cmidrule{6-8} \cmidrule{10-12}
        & Exp 1 & Exp 2 & Exp 3 && Exp 1 & Exp 2 & Exp 3&&  Exp 1 & Exp 2 & Exp 3 \\
        \midrule
        250     & 80.2 (72.3)     & 77.7 (71.9)    & 78.3 (73.9) &&  10.9 (16.1)   & 22.3 (27.0)       & 21.7 (23.4) && 8.9 (11.6) & 0.0 (1.1) & 0.0 (2.7)      \\    
        500 & 92.8 (86.8) & 87.6 (84.2)  & 87.7 (84.7)    && 4.7 (7.9) & 12.4 (15.3 )& 12.3 (12.7) && 2.5 (5.3) & 0.0 (0.5) & 0.0 (2.6)\\
        750 & 96.6 (92.9) & 91.2 (87.8)       & 90.7 (88.4)        && 2.5 (3.8)& 8.8 (12.1)& 9.3 (9.7)&& 0.9 (3.3)& 0.0 (0.1) & 0.0 (1.9)\\
        1000 & 98.5 (96.9) & 95.1 (93.9)  & 94.8 (92.9)&& 1.3 (2.3) & 4.9 (6.1)& 5.2 (5.2) && 0.2 (0.8) &  0.0 (0.0) &0.0 (1.9)\\
        1250& 99.4 (99.0) & 98.1 (97.3)   & 95.4 (93.7) && 0.5 (0.7) & 1.9 (2.7)& 4.6 (4.7) && 0.1 (0.3) & 0.0 (0.0) &0.0 (1.6)\\
        \bottomrule
    \end{tabular} 
\end{table*}
The results in Table\,\ref{tab:Example} allow us to numerically show the observations made in Section~\ref{sec:comp}.
Firstly, the true positive rate increases as $T$ increases. 
Secondly, the different directions of excitation in the experiments play an important role. Clearly, Exp.~1 allows for the fastest identification of the true system, both in the numerical simulation and in the lower bound.
When considering Exp.~2 and Exp.~3 we can observe that $\theta_2$ can be ruled out very quickly because it differs from the true system in the $x_3$ directions which has a very high SNR (either through large excitation or low noise).   
Since the excitation of $x_1$ is very low in Exp.~2 and the noise affecting $x_1$ is very large in Exp.~3 distinguishing between $\theta_0$ and $\theta_1$ takes longer than in Exp.~1. 
Again this can be observed both numerically and in the lower bound, even though the inputs only differ in their directionality, not their size. 
Interestingly, both the lower bound and the numerical results also indicate that Exp.~3 poses the hardest identification problem out of the three.
Numerically it can be seen that \ac{MLE} consistently outperforms \ac{OLS} for our setup, reinforcing the interest in its statistical analysis. 
\change{Note that the system considered is strongly damped and hence based on the discussions in Section~\ref{sec:comp} the number of samples only has a weak influence on the upper bound.}
Reducing the conservatism in Theorem\,\ref{th:upperBound} especially in this strictly stable regime remains an important open problem.

\section{Conclusion}
In this paper we provided upper and lower bounds for the sample complexity of identifying an LTI system from a finite set of candidates in absence of stability assumptions. 
These are the first finite sample guarantees for this setting that, albeit relevant, was not studied before.
Future work includes reducing the conservatism in the upper bound, \change{considering additional noise classes} and relaxing the assumption that $\theta_* \notin \mathcal{S}$. 
\change{Finally, the fact that no stability is required here (Remark~\ref{rem:stability}), suggests to better understand whether this depends on the choice of the estimator made here or the finite hypothesis class.}

\bibliographystyle{IEEEtran}
\bibliography{references.bib}

\appendix

\subsection{Proof of Theorem~\ref{th:lowerBound}}\label{app:proofLower}
    Define the data observed up to time $t$ as $\mathcal{D}_t \coloneqq \{x_1, u_1, \dots x_t, u_t\}$ and the probability of the observing $\mathcal{D}_t$ under system $\theta$ as $\Prob_\theta(\mathcal{D}_t)$. Then, we define the log-likelihood ratio of the first $t$ observations under $\theta_*$ and some $\theta_i \in \SSet\setminus\{\theta_*\}$ as 
    $L_t = \log\left(\frac{\Prob_{\theta_*}(\mathcal{D}_t)}{\Prob_{\theta_i}(\mathcal{D}_t)}\right)$.
  Following the change of measurement argument in \cite{jedra2019sample}, we use the generalized data processing inequality \cite[Lemma 1]{Garivier2019} to obtain 
    \begin{align*}
        \Expect_{\theta_*}\left[L_t\right] &= \KL{\Prob_{\theta_*}(\mathcal{D}_t)}{\Prob_{\theta_i}(\mathcal{D}_t)} \\&\ge \sup_{\E \in \mathcal{F}_t} \kl{\Prob_{\theta_*}(\E)}{\Prob_{\theta_i}(\E)}, 
    \end{align*}
    where $\kl{x}{y}$ is the KL-divergence of two Bernoulli distributions of means $x$ and $y$, respectively.
    Since we analyze $\delta$-stable algorithms we define the event $\E \coloneqq \{\hat \theta_t = \theta_*\}$ s.t. consequently $\Prob_{\theta_*}(\E) \ge 1-\delta$ and $\Prob_{\theta_i}(\E) \le \delta$
    and hence 
    \begin{equation*}
        \kl{\Prob_{\theta_*}(\E)}{\Prob_{\theta_i}(\E)} \ge (2\delta -1)\log\left(\frac{1-\delta }{\delta}\right) \ge \log(\nicefrac{1}{2.4\delta}).
    \end{equation*}
    Further, we follow \cite[Section IV.A]{jedra2019sample} to obtain 
    \begin{align*}
        \Expect_{\theta_*}&\left[L_t\right] = \frac12 \Expect_{\theta_*} \left[\sum_{s=0}^{t-1} [\star] [\star] \Sigma_w^{-1} \begin{bmatrix}
            \Delta A_i & \Delta B_i 
        \end{bmatrix} \begin{bmatrix}
            x_s \\ u_s 
        \end{bmatrix} \right] \\
        &= \frac12 \Tr\left([\star]\Sigma_w^{-1} \begin{bmatrix}
            \Delta A_i & \Delta B_i 
        \end{bmatrix} \sum_{s=0}^{t-1} \Expect_{\theta_*}\left[\begin{bmatrix}
            x_s \\ u_s 
        \end{bmatrix} [\star]\right]\right),
    \end{align*}
    where in the last step we used the fact $\Expect\left[X^\top A X\right] = \Tr(A \Expect\left[XX^\top\right])$. 
    Note that up to this point, we have not used that $u \simiid \N(0, \Sigma_u)$.
    By this assumption, we can write 
    $
        \Expect_{\theta_*}\left[[\star]\begin{bmatrix} 
            x_s & u_s 
        \end{bmatrix}\right] = \diag(\Sigma_{x_t}, \Sigma_u),
    $
    where $\Sigma_{x_t}$ is the $t$-step controllability Gramian defined in \eqref{eq:Distributions}. Using $\Tr(ABC) = \Tr(BCA)$ with  $A =\begin{bmatrix}
            \Delta A_i & \Delta B_i 
        \end{bmatrix}^\top \Sigma_w^{-\frac12}$, 
        $B = A^\top $ and   $C= \sum_{s=0}^{t-1} \diag(\Sigma_{x_t}, \Sigma_u)$
    we finally obtain
    \begin{align*}
        \Expect_{\theta_*}\left[L_t\right] = \Tr\left([\star] \left( \sum_{s=0}^{t-1} [\star] \Sigma_u \Delta B_i^\top + [\star] \Sigma_{x_s} \Delta A_i^\top \right) \Sigma_w^{-\frac12}\right)
    \end{align*} 
    and hence for any $\delta$-stable algorithm 
    \begin{align*}
        \Tr\Bigg(\Sigma_w^{-\frac12} \Bigg(\sum_{s=0}^{\bar T-1}  [\star] \Sigma_u \Delta B_i^\top + [\star] &\Sigma_{x_s} \Delta A_i^\top \Bigg) \Sigma_w^{-\frac12}\Bigg) \\ &\ge 2 \log(\frac{1}{2.4 \delta}).
    \end{align*}
    Finally, we similarly as in the proof of Theorem\,\ref{th:lowerBound} obtain  
    \begin{alignat*}{1}
        \Tr\Bigg(\Sigma_w^{-\frac12} \Bigg(\sum_{s=0}^{\bar{T}-1}  \Delta B_i \Sigma_u \Delta {B_i}^\top &+ \Delta A_i \Sigma_{x_s} \Delta A_i^\top \Bigg) \Sigma_w^{-\frac12}\Bigg) \\
    = \bar T \Bigg\Vert \Sigma_w^{-\frac12} \Delta B_i \Sigma_u^{\frac12}\Bigg\Vert _\mathrm{F}^2  + \sum_{s=0}^{\bar T -1} &\sum_{k=0}^{s-1} \left\Vert \Sigma_w^{-\frac12} \Delta A_i A^k B \Sigma_u^{\frac12}\right\Vert _\mathrm{F}^2  \\ &\quad + \left\Vert \Sigma_w^{-\frac12} \Delta A_i A^k \Sigma_w^{\frac12}\right\Vert _\mathrm{F}^2.
    \end{alignat*}

\subsection{Proof of Proposition\,\ref{prop:BMSB}}\label{sec:ProofBMSB}
Recall the definition of the random variable $z_t^i$ \eqref{eq:defz} and its distribution~\eqref{eq:distZ}.
For some $v \in \Sbb^{n_x-1}$ we have $\langle v, z^i_{s+t} \rangle \vert \mathcal{F}_s \sim \mathcal{N}\left(\langle v, \Delta A_i A^t x_s \rangle, v^\top \Sigma_{z_t}^iv\right)$.
Now consider some $k' \le t$ s.t.
\begin{align}
    \Prob\Bigg[\vert \langle v, z^i_{s+t} \rangle\vert &\ge   \sqrt{v^\top \Sigma_{z_{k'}}^iv} \vert \mathcal{F}_s \Bigg] \notag \\ &= \Prob\left[\vert \langle v, z_{t}^i \rangle\vert \ge  \sqrt{v^\top \Sigma_{z_{k'}}^iv} \right] \notag \\ &\ge \Prob\left[\vert \langle v, z^i_{t} \rangle\vert \ge  \sqrt{v^\top \Sigma_{z_t}^iv}\right], \label{eq:ProbIneq}
\end{align}
where the first step follows since the distributions are equal and the inequality follows from $\Sigma_{z_t}^i \succeq \Sigma_{z_{k'}}^i$ for $k' \le t$.
Defining $\zeta_t^i = \langle v, z_t^i-\Delta A_i A ^t x_0\rangle \simiid \mathcal{N}(0,v^\top \Sigma_{z_t}^iv)$ yields
\begin{align*}
    \Prob\Bigg[\vert \langle v, z_t^i \rangle \vert &\ge  \sqrt{v^\top \Sigma_{z_t}^i v} \Bigg]  \\ & = \Prob\left[\vert \zeta_t^i + \langle v, \Delta A_i A^t x_0 \rangle \vert \ge  \sqrt{v^\top \Sigma_{z_t}^i v} \right] \\
                                                                                                                            & \ge \Prob\left[\vert \zeta_t^i \vert \ge   \sqrt{v^\top \Sigma_{z_t}^i v} \right] \ge \frac{3}{10},
\end{align*}
where the last inequality follows from the fact that for any $ \xi \sim \mathcal{N}(0,\sigma^2)$ we have $\Prob[\vert \xi \vert \ge \sigma] \ge \nicefrac{3}{10}$.
It follows that
\begin{align*}
    \frac 1k &\sum_{t=1}^{k}  \Prob\Big[\vert \langle v, z^i_{s+t} \rangle\vert \ge \sqrt{v^\top \Sigma_{z_{k'}}^iv} \vert \mathcal{F}_s\Big] \\ &=  \frac 1k \sum_{t=1}^{k}\Prob\left[\vert \langle v, z_t^i \rangle \vert \ge   \sqrt{v^\top \Sigma_{z_{k'}}^i v} \right]   \\
                                                                                                                                                                             & \ge  \frac 1k \sum_{t=k'}^{k}  \Prob\left[\vert \langle v, z_t^i \rangle \vert \ge   \sqrt{w^\top \Sigma_{z_t}^i w} \right] \ge \frac{3}{10} \frac{k-k'+1}{k} .
\end{align*}
The result then follows by choosing $k' = \frac{1}{2}k$.
\subsection{Using the BMSB condition to show anti-concentration}\label{app:SimchoTight}
If a sequence $(z_t)_{t\ge 0}$ satisfies the BMSB condition, anti-concentration can be shown. In the following, we provide a milder version of \cite[Proposition 2.5]{simchowitz2018learning}, in which the probability bound scales with $p$ instead of $p^2$.  
\begin{corollary}\label{co:SimchoTight}
    Suppose that $(z_1, \dots, z_T) \in \R^T$ satisfies the $(k, \nu, p)$-BMSB condition.
    Then 
    \begin{equation*}
        \Prob\left[\sum_{t=1}^T z_t^2 \le \frac{\nu^2 p^2}{8} k \lfloor\nicefrac{T}{k}\rfloor\right] \le \exp\left(-\frac{\lfloor \nicefrac{T}{k}\rfloor p}{16}\right).
    \end{equation*}
\end{corollary}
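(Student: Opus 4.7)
\textbf{Proof plan for Corollary~\ref{co:SimchoTight}.}
Set $M := \lfloor T/k \rfloor$ and partition the first $Mk$ indices into consecutive blocks $B_j := \{(j-1)k+1,\dots, jk\}$ for $j=1,\dots,M$. For each block define the \emph{block count}
\begin{equation*}
    Y_j := \sum_{t \in B_j} \mathbb{1}\{|z_t| \ge \nu\}
    \qquad\text{and}\qquad
    Z_j := \mathbb{1}\{Y_j \ge kp/2\}.
\end{equation*}
The BMSB condition applied with the filtration element $\mathcal{F}_{(j-1)k}$ gives
$\mathbb{E}[Y_j \mid \mathcal{F}_{(j-1)k}] \ge kp$ almost surely.

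Next I would convert this conditional expectation bound into a conditional probability bound via a Paley--Zygmund-style argument. Since $0 \le Y_j \le k$,
\begin{equation*}
    kp \;\le\; \mathbb{E}[Y_j \mid \mathcal{F}_{(j-1)k}]
    \;\le\; k\,\Prob[Z_j = 1 \mid \mathcal{F}_{(j-1)k}] + \tfrac{kp}{2}\,\Prob[Z_j = 0 \mid \mathcal{F}_{(j-1)k}],
\end{equation*}
which rearranges to $\Prob[Z_j = 1 \mid \mathcal{F}_{(j-1)k}] \ge p/2$ a.s. Hence $\sum_{j=1}^M Z_j$ stochastically dominates a $\mathrm{Binomial}(M, p/2)$ variable in the sense that its conditional MGF is dominated term-by-term.

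The crux of the improvement over \cite[Prop.~2.5]{simchowitz2018learning} is replacing a Hoeffding (additive) deviation bound by a \emph{multiplicative Chernoff} bound: this is where a $p^2$ in the exponent can be improved to a $p$. Concretely, I would iterate the standard MGF inequality $\mathbb{E}[e^{-\lambda Z_j}\mid \mathcal{F}_{(j-1)k}] \le 1 - (p/2)(1-e^{-\lambda})$ across the filtration, then optimize $\lambda > 0$ to obtain the conditional multiplicative Chernoff inequality
\begin{equation*}
    \Prob\!\left[\sum_{j=1}^M Z_j \le \tfrac{Mp}{4}\right] \;\le\; \exp\!\left(-\tfrac{Mp}{16}\right),
\end{equation*}
using the standard deviation parameter $\delta = 1/2$ off of the mean $Mp/2$.

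Finally, translate back: on the complementary event $\{\sum_j Z_j > Mp/4\}$, at least $Mp/4$ blocks satisfy $Y_j \ge kp/2$, and each such block contributes at least $(kp/2)\nu^2$ to $\sum_{t=1}^T z_t^2$, so
\begin{equation*}
    \sum_{t=1}^T z_t^2 \;\ge\; \tfrac{Mp}{4}\cdot \tfrac{kp\nu^2}{2} \;=\; \tfrac{\nu^2 p^2}{8}\, k\lfloor T/k\rfloor,
\end{equation*}
which is precisely the threshold in the statement. Contrapositive and the Chernoff bound above yield the claim. The main obstacle is the third step: carrying out the Chernoff-type tail bound on the dependent sequence $(Z_j)$ whose conditional Bernoulli parameters are only lower-bounded, without reverting to a Hoeffding bound that would reintroduce the suboptimal $p^2$ in the exponent; this requires the one-step MGF domination argument iterated over the filtration rather than an i.i.d.\ comparison.
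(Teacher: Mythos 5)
Your proposal is correct and follows essentially the same route as the paper's proof: a block decomposition with per-block indicators whose conditional success probability is at least $p/2$ (via the reverse-Markov step inherited from \cite[Proposition 2.5]{simchowitz2018learning}), followed by an optimized conditional-MGF (multiplicative Chernoff) bound in place of the Hoeffding-type bound, which is exactly how the paper upgrades the exponent from $p^2$ to $p$. The constants also match: the paper's optimizer $\lambda_* = \log\bigl(\tfrac{1-p/2}{2-p/2}\bigr)$ yields the same $e^{-\lfloor T/k\rfloor p/16}$ you obtain from the binomial lower-tail bound with $\delta = \nicefrac{1}{2}$ about the mean $Mp/2$.
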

\begin{proof}
    The proof of Corollary~\ref{co:SimchoTight} builds on the original work, thus we only provide a sketch here. 
    Set $S =\lfloor\nicefrac{T}{k}\rfloor$ 
    and  
    \begin{equation*}
        B_j = \mathbb{I} \left[\sum_{t=1}^k z_{jk+t}^2 \ge \frac{\nu^2 p k}{2}\right] \quad \forall j\in [0, S-1].    
    \end{equation*}
    Using the same arguments as in the original proof we obtain
    \begin{equation}
        \Prob\Bigg[\sum_{t=1}^T z_t^2 \le \frac{\nu^2 p^2}{8} k S\Bigg]  \le \inf_{\lambda\le 0} e^{-\lambda S\frac{p}{4}} \Expect\left[e^{\lambda \sum_{j=0}^{S-1} B_j}\right]. \label{eq:ProofImproved1}
    \end{equation}
    Inserting the optimizer $\lambda_* = \log(\frac{1-\frac{p}{2}}{2-\frac{p}{2}})$ of \eqref{eq:ProofImproved1}, yields
    \begin{align*}
        \Prob&\Bigg[\sum_{t=1}^T z_t^2 \le \frac{\nu^2 p^2}{8} k S\Bigg] \le \left(\left(\frac{2-p}{4-p}\right)^{-\nicefrac{p}{4}} \left(1-\frac{p}{4-p}\right)\right)^{S}\\
&\le \left(2^{\nicefrac{p}{4}} \left(1-\frac{p}{4-p}\right)\right)^{S}
 \le e^{-\frac{S}{4}p \left(\frac{p}{4} + 1- \log(2)\right)} \le e^{-\frac{S}{16}p},
    \end{align*}
where the last step follows from $p>0$ and $1- \log(2) >\nicefrac{1}{4}$.
\end{proof}
\end{document}